\def\BibTeX{{\rm B\kern-.05em{\sc i\kern-.025em b}\kern-.08em
    T\kern-.1667em\lower.7ex\hbox{E}\kern-.125emX}}
\newtheorem{theorem}{Theorem}
\newtheorem{corollary}{Corollary}
\newtheorem{definition}{Definition}
\tikzstyle{pinstyle} = [pin edge={to-,thin,black}]
\newcommand{\nat}{\mathbb{N}}
\newcommand{\reals}{\mathbb{R}}
\newcommand{\dataset}{\boldsymbol{\beta}}
\newcommand{\gaussN}{\mathcal{N}}
\newcommand{\Prob}{\mathbb{P}}
\newcommand{\Tr}{\textrm{Tr}}
\newcommand{\barbeta}{\bar{\boldsymbol{\beta}}}
\newcommand{\optstat}{\Phi^*(\bar{\boldsymbol{\beta}})}
\newcommand{\optstati}{\hat{\Phi}^i(\barbeta)}
\newcommand{\stati}{\Phi^i}
\newcommand{\fcdf}{F_{\rvtest}}
\newcommand{\fccdf}{\bar{F}_{\rvtest}}
\newcommand{\Popt}{\mathbb{P}_{\optstat}}
\newcommand{\barstati}{\bar{\Phi}^i(\barbeta)}
\newcommand{\baroptstat}{\bar{\Phi}^*(\barbeta)}
\newcommand{\rvtest}{\Psi}
\newcommand{\rvtesti}{\Psi^i}
\newcommand{\numagents}{M}
\newcommand{\boldf}{\boldsymbol{f}}
\newcommand{\conset}{X_c^{\dtime}}
\newcommand{\effset}{X_E}
\newcommand{\simplex}{\mathcal{W}_{\numagents}}
\newcommand{\psimplex}{\simplex^+}
\newcommand{\woptset}{S(\sw)}
\newcommand{\lconst}{\alpha_{\dtime}}
\newcommand{\dtime}{t}
\newcommand{\respi}{\var^i_{\dtime}}
\newcommand{\nrespi}{\tilde{\var}^i_{\dtime}}
\newcommand{\noise}{\epsilon}
\newcommand{\noisei}{\noise^i}
\newcommand{\ndistr}{\Lambda}
\newcommand{\sw}{\mu}
\newcommand{\var}{\beta}
\newcommand{\varti}{\var_t^i}
\newcommand{\dimn}{N}
\newcommand{\numadv}{m}
\newcommand{\statei}{x}
\newcommand{\measi}{y^i}
\newcommand{\snoisei}{w}
\newcommand{\mnoisei}{v^i}
\newcommand{\stime}{t} 
\newcommand{\ftime}{k} 
\newcommand{\sdim}{q} 
\newcommand{\mdim}{p} 
\newcommand{\sncov}{Q_{\stime}(\lconst)}
\newcommand{\mncovi}{R_{\stime}(\varti)}
\newcommand{\argo}{\beta}
\newcommand{\kstate}{\hat{x}} 
\newcommand{\kcov}{\Sigma} 
\newcommand{\ARE}{\mathcal{A}(\lconst,\varti,\kcov)}
\begin{document}

\title{Statistical Detection of Coordination in a Cognitive Radar Network through Inverse Multi-objective Optimization\\
\thanks{This research was funded by National Science Foundation grant CCF-2112457,  Army Research office grant W911NF-21-1-0093 , and Air Force Office of Scientific Research grant FA9550-22-1-0016.}
}

\author{\IEEEauthorblockN{Luke Snow}
\IEEEauthorblockA{\textit{Electrical and Computer Engineering} \\
\textit{Cornell University}\\
Ithaca, NY \\
las474@cornell.edu}
\and
\IEEEauthorblockN{Vikram Krishnamurthy}
\IEEEauthorblockA{\textit{Electrical and Computer Engineering} \\
\textit{Cornell University}\\
Ithaca, NY \\
vikramk@cornell.edu}
}

\maketitle

\begin{abstract}
Consider a target being tracked by a cognitive radar network. If the target can intercept noisy radar emissions, how can it detect coordination in the radar network? By 'coordination' we mean that the radar emissions satisfy Pareto optimality with respect to multi-objective optimization over the objective functions of each radar and a constraint on total network power output. This paper provides a novel inverse multi-objective optimization approach for statistically detecting Pareto optimal ('coordinating') behavior, from a finite dataset of noisy radar emissions. Specifically, we develop necessary and sufficient conditions for radar network emissions to be consistent with multi-objective optimization (coordination), and we provide a statistical detector with theoretical guarantees for determining this consistency when radar emissions are observed in noise. We also provide numerical simulations which validate our approach. Note that while we make use of the specific framework of a radar network coordination problem, our results apply more generally to the field of inverse multi-objective optimization. 
\end{abstract}

\begin{IEEEkeywords}
Multi-Objective Optimization, Statistical Detection, Cognitive Radar Network
\end{IEEEkeywords}

\section{Introduction}
Cognitive radars \cite{haykin2012cognitive}, use the perception-action cycle of cognition to sense the target, learn relevant information, then optimally adapt their output emissions in response. We consider the case when there is a \textit{network} of cognitive radars which coordinate to optimally track a target. In a coordinating radar network, not only do the individual cognitive radars optimally adapt their output (with respect to an individual objective function) subject to resource constraints, but also the \textit{allocation of resources} between radars is subject to an optimization procedure. The resource to be allocated is often interpreted as the total power available to the radar network. Such an optimal power allocation strategy has been studied in (\cite{shi2017power}, \cite{panoui2014game}, \cite{chavali2011scheduling}, \cite{bacci2012game} and references therein), in which algorithmic game-theoretic methods are employed. Specifically, \cite{shi2017power} poses the problem of adaptive power allocation for radar networks as a cooperative game, and provides an iterative cooperative Nash bargaining algorithm which converges quickly to the Pareto optimal equilibrium.

However, in this work we are interested in the \textit{inverse} problem; namely, how can an external observer detect if a radar network is coordinating by observing its signals (in noise)? Can one then use these signals to reconstruct underlying objective functions which drive the network output, thus allowing for prediction of future responses? These questions, and extensions thereof, have been investigated in \cite{krishnamurthy2020identifying}, \cite{krishnamurthy2021adversarial}, \cite{pattanayak2022can}, \cite{pattanayak2022meta} in the framework of a \textit{single} cognitive radar. In this work we generalize these developments to a radar network (multi-objective optimization) framework. We define a coordinating radar network as a system which outputs signals which are Pareto efficient with respect to multi-objective optimization over each radar's objective and subject to a total power constraint; given observations of radar emissions we attempt to determine whether the radar network is coordinating, and subsequently reconstruct objective functions which closely match those in the multi-objective optimization. Thus, this problem is abstractly similar to inverse game theory \cite{kuleshov2015inverse} in that we aim to detect the output of a cooperative game (Pareto optimality), multi-agent inverse reinforcement learning \cite{natarajan2010multi} in that we aim to reconstruct feasible objective functions, and inverse multi-objective optimization \cite{dong2020expert}. 

Previous work \cite{snow2022identifying} has considered a similar problem of detecting coordination (multi-objective optimization) in a radar network based on \textit{deterministic} radar network signals. The key difference is that in this paper we assume the radar emissions  are \textit{observed in noise}. Specifically, the main contribution of this work is a statistical detector for identifying coordination from noisy observed signals. We provide theoretical guarantees on the probability of Type-I error of this detector, and demonstrate its efficacy via numerical simulations. The detector is based on a linear programming formulation, the feasibility of which is shown to be equivalent to the existence of a multi-objective optimization problem giving rise to the observed signals. 

We  emphasize that apart from radar networks, detecting multi-objective optimization by observing a black box applies to more general multi-agent {\em inverse} reinforcement learning in technological and social networks. 

This paper is organized as follows: In section~\ref{sec:MOO} we provide background on the problem of multi-objective optimization. In section~\ref{sec:system_model} we introduce the cognitive radar protocol and measurement model, and show how radar network coordination is equivalent to the multi-objective optimization framework presented in Section~\ref{sec:MOO}. In section~\ref{sec:MOO_detector} we provide necessary and sufficient conditions for the observed dataset of radar  emissions to be consistent with multi-objective optimization (Theorem~\ref{thm:cherchye1}), and we provide a statistical detector for determining whether the \textit{noisy} observed dataset is consistent with multi-objective optimization. Theoretical guarantees for this detector are given in Theorem~\ref{thm:stat_det}. 
In  section~\ref{sec:num_ex} we provide numerical studies that demonstrate the validity of our coordination-testing and objective reconstruction procedures. Finally we conclude in section~\ref{sec:conc}. 

\section{Background. Multi-Objective Optimization}
\label{sec:MOO}

Here we introduce the linearly constrained multi-objective optimization problem that is the basis of our problem formulation. We will  consider a cognitive radar network which distributes its power resources in such a way to solve this optimization, where each radar has a distinct objective function. For discrete time $\dtime$, subject to increasing and continuous linear function $\lconst \in \reals^n$ and optimization functions $f^i(\cdot): \reals^n \to \reals, i= [M] := \{1,\dots,\numagents\}$, the linearly constrained multi-objective optimization problem is given as: 
\begin{align}
\begin{split}
\label{eq:MOP}
 &\arg \max _\var \{f^1(\var),\dots,f^{\numagents}(\var)\}\ \\& s.t. \ \var \in \conset := \{\var \in \reals^n  : \lconst'\var \leq 1 \}
 \end{split}
\end{align}
where the linear constraint $\lconst \var$ is bounded by 1 without losing generality. In single-objective optimization, the goal is to find the best feasible argument which maximizes the objective. However, in multi-objective optimzation there will seldom exist an argument $\var$ which simultaneously maximizes all objectives, i.e. there will be tradeoffs between objectives for varying argument $\var$. Thus, the solution concept for the multi-objective optimization problem \eqref{eq:MOP} is that of \textit{efficiency}:
\begin{definition}{\textbf{Efficiency (Pareto Optimality)}}:
\label{def:par_opt}
 For fixed $\{\{f^i(\cdot)\}_{i=1}^{\numagents},\lconst\}$ and a vector $\var^*\in \conset$, let 
\begin{align*}
\begin{split}
    &Z^t(\var^*) = \{\var \in \conset : f^i(\var) \geq f^i(\var^*) \ \forall i \in [\numagents] \\
    &Y^t(\var^*) = \{\var \in \conset : \exists k : f^k(\var) > f^k(\var^*) \}
\end{split}
\end{align*}
The vector $\var^*$ is said to be \textit{efficient} if 
\begin{equation}
\label{efficiency}
Z^t(\var^*) \cap Y^t(\var^*) = \varnothing
\end{equation}
i.e., there does not exist another vector in the feasible set $\conset$ which increases the value of some objective $f^i(\cdot)$ without simultaneously decreasing the value of some other objective $f^j(\cdot)$, $i,j\in [\numagents]$.
\end{definition}
We then denote the set of all efficient solutions to the problem \eqref{eq:MOP} as 
\begin{equation}
\label{eq:effset}
\effset(\{f^i\}_{i=1}^M, \lconst) := \{\var^* \in \conset : \eqref{efficiency} \textrm{ is satisfied}\}
\end{equation}
and we say that $\var^*$ solves \eqref{eq:MOP} if and only if $\var^*$ is efficient, i.e.
\begin{align}
   &\var^* \in \{ \arg \max_\var \{f^1(\var),\dots,f^{\numagents}(\var)\}\ s.t. \ \var \in \conset \}\\
   &\Longleftrightarrow \ \var^* \in \effset(\{f^i\}_{i=1}^M, \lconst)
\end{align}

Denoting $\boldf(\var) = (f^1(\var),\dots,f^{\numagents}(\var))^T$, we can use the following problem of weighted sum (PWS) \cite{gass1955computational} to obtain an efficient solution:
\begin{equation}
\label{eq:PWS}
\max \ \sw^T\boldf(\var) \ s.t. \ \var \in \conset
\end{equation}
where $\sw = (\sw^1,\dots,\sw^{\numagents})^T \in \reals^{\numagents}_+$. The set of weights $\sw$ is restricted to the unit simplex, denoted as $\simplex := \{\sw \in \reals^{\numagents}_+ : \boldsymbol{1}^T\sw = 1\}$. Then we can denote the set of optimal solutions for \eqref{eq:PWS} as
\[\woptset = \arg \max_\var\{\sw^T\boldf(\var) : \var \in \conset\}
\]
Then, letting $\psimplex = \{\sw \in \reals^{\numagents}_{++} : \boldsymbol{1}^T\sw = 1\}$ denote the unit simplex with each weight $\sw^i$ strictly positive, we have \cite{miettinen2012nonlinear}:

\begin{equation}
\label{eq:effrel}
\bigcup_{\sw \in \psimplex}\woptset \subseteq \effset(\{f^i\}_{i=1}^M, \lconst) \subseteq \bigcup_{w \in \simplex} \woptset
\end{equation}

This relation will be useful for us in our result which states necessary and sufficient conditions for the radar network responses to be consistent with multi-objective optimization (coordination). We next present the radar network interaction model, and show how the multi-objective optimization framework presented here arises naturally.

\section{Radar Network Interaction Model}
\label{sec:system_model}

With the above background, we are now ready to discuss the cognitive radar network model.
We consider a radar network which optimally distributes its resources between $\numagents$ radars to track a target. The notion of 'optimally' coincides with Pareto optimality (Def. \ref{def:par_opt}). Specifically, at each time step the radar $i$ outputs signal $\varti$ such that the collective response $\{\varti\}_{i=1}^{\numagents}$ satisfies Pareto optimality with respect to each radar's objective and a joint power constraint. Abstractly, we take the point of view of the adversary which is being tracked by the radar network. We (the adversary) can interact with the network by performing purposeful maneuvers, and can observe noisy radar emissions in response to our maneuvers. Our aim is to determine whether or not the radar network is performing a multi-objective optimization (coordinating) to produce the emitted signals.  

\subsection{Interaction Dynamics}
Here we provide the general interaction dynamics between the cognitive radar and the adversary (us). We consider two time scales for the interaction: the fast time scale $\ftime = 1,2,\dots$ represents the scale at which the target state and measurement dynamics occur, and the slow time scale $\stime = 1,2,\dots$ represents the scale at which the probes $\lconst$ and radar responses $\{\varti\}_{i=1}^{\numagents}$ occur.  

\begin{definition}[Radar Network - Target Interaction]
The radar network - adversary interaction has the following dynamics:
\begin{align}
\label{inter_dynam}
    \begin{split}
        \textrm{target probe}: \lconst &\in \reals^{\dimn}_+ \\
        \textrm{radar i emission} : \varti &\in \reals^{\dimn}_+ \\
        \textrm{target state} : x_{\ftime} &\in \reals^{\sdim}, \
        x_{\ftime + 1} \sim p_{\lconst}(x| x_{\ftime}) \\
        \textrm{radar i observation}: \measi_{\ftime} &\in \reals^{\mdim}, \
        \measi_{\ftime} \sim p_{\varti}(y|x_{\ftime}) \\
        \textrm{radar i tracker}: \pi^i_{\ftime} &= \mathcal{T}(\pi^i_{\ftime-1},\measi_{\ftime}) 
    \end{split}
\end{align}
\end{definition}
where $\mathcal{T}$ represents a general Bayesian tracker. For a fixed $\stime$ in the slow time-scale, $\lconst$ abstractly represents a particular target maneuver (radial acceleration, etc.) which parametrizes the state update kernel, and $\varti$ abstractly represents radar $i$'s signal output which parametrizes its measurement kernel. These interaction dynamics are illustrated in Fig.~\ref{fig:blockdiag}. 
Taking the point of view of the target, we aim to detect if the radars are \textit{coordinating}:

\begin{definition}[Coordinating Cognitive Radar network]
\label{def:coord}
Considering the interaction dynamics \eqref{inter_dynam}, we define a coordinating cognitive radar network to be a network of $\numagents$ radars , each with individual monotone increasing objective functions $f^i: \reals^{\dimn} \to \reals, i\in[\numagents]$, which produces output signals $\{\varti\}_{i=1}^{\numagents}$ on the slow time-scale in accordance with
\begin{align}
\label{def:coord_eq}
    \begin{split}
        &\{\varti\}_{i=1}^{\numagents} \in \arg\max_{\{\argo^i\}_{i=1}^{\numagents}} \{f^1(\argo^1),\dots,f^{\numagents}(\argo^{\numagents})\} \\
        & s.t. \quad \lconst (\sum_{i=1}^{\numagents} \argo^i) \leq 1 
    \end{split}
\end{align}
    
\end{definition}

Note that \eqref{def:coord_eq} is a special case of the general problem in \eqref{eq:MOP}. Thus, a coordinating cognitive radar network emits signals which are \textit{efficient (Pareto optimal)} (Def. \eqref{def:par_opt}) in order to optimally parametrize the measurement kernels (through e.g., increasing measured signal power) subject to each objective function, the state dynamics of the target, and a constraint on the total power output.

Next, we specify a particular concrete example of these interaction dynamics in which the spectra of state and measurement noise covariance matrices act as the probe and response. We justify how this provides a natural interpretation of the above abstract framework.  \\
\textit{Remark}: A multi-target interaction can be incorporated into the above framework by considering $\lconst$ to be the vector of state-noise spectral norms of each target. We exclude this for brevity, but consider it in future work.
\subsection{Constrained Spectral Optimization}
Linear Gaussian dynamics for a target's kinematics \cite{li2003survey} and linear Gaussian measurements at each radar are widely assumed as a useful approximation \cite{bar2004estimation}. Thus we will consider the following linear Gaussian state dynamics and measurements over the \textit{fast time scale $\ftime \in \nat$}:
\begin{align}
    \begin{split}
    \label{lin_gaus}
        \statei_{\ftime+1} &= A\statei_{\ftime} + \snoisei_{\ftime}, \  \statei_0 \sim \pi^i_0, \\
        \measi_{\ftime} &= C^i\statei_{\ftime} + \mnoisei_{\ftime}, \ i\in[\numadv]
    \end{split}
\end{align}
where $\statei_{\ftime}, \snoisei_{\ftime} \in \reals^{\sdim}$ are the target state and noise vectors, respectively, and $A \in \reals^{\sdim\times \sdim}$ is the state update matrix. $\measi_n \in \reals^{\mdim}$ is the $i$'th radar measurement of the target, $C^i \in \reals^{\mdim \times \sdim}$ is the $i$'th radar measurement transformation, and $\mnoisei_n \in \reals^{\mdim}$ is the measurement noise. The constraints and subsequent radar responses will be indexed over the \textit{slow time scale} $\stime \in \nat$. Abstractly, these will parameterize the state and noise covariance matrices:
\begin{equation}
    \snoisei_{\ftime} \sim \gaussN(0,\sncov), \ \mnoisei_{\ftime} \sim \gaussN(0,\mncovi)
\end{equation}

\begin{figure}
\centering
  \includegraphics[width=\linewidth,scale=0.2]{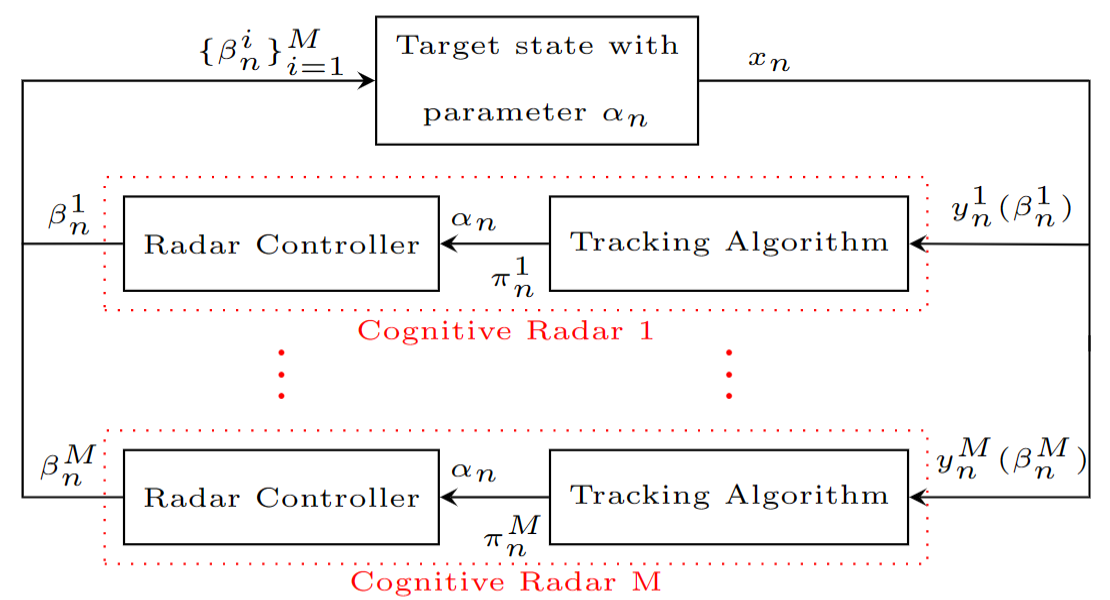}
  \caption{\scriptsize Interaction of our dynamics with the adversary's cognitive radar network. Each cognitive radar is comprised of a Bayesian tracker and a radar controller. Based on the time series $\{\varti\}_{i=1}^{\numagents}, t=1,\dots,T$, our goal is to determine if the radar network is coordinating, i.e., if Def.~\ref{def:coord} is satisfied.}
  \label{fig:blockdiag}
\end{figure}

 In this spectral interpretation, $\lconst$ represents the vector of eigenvalues of state-noise covariance matrix $Q_t$ and $\varti$ represents the vector of eigenvalues of the inverse measurement noise covariance matrix $R_t^{-1}$. 
The radar network tracks our target using Kalman filter trackers: \\
Based on observations $\measi_1,\dots,\measi_{\ftime}$ of the target, the tracking functionality in the $i$'th radar computes the target state posterior
\begin{equation*}
\label{eq:kalpost}
    \pi_{\ftime}^i = \gaussN(\kstate_{\ftime}^i,\kcov_{\ftime}^i)
\end{equation*}
where $\kstate_{\ftime}^i$ is the conditional mean state estimate and $\kcov_{\ftime}^i$ is the covariance, computed by the classical Kalman filter:
\begin{align*}
    \begin{split}
        \kcov_{\ftime + 1 | \ftime}^i &= A\kcov_{\ftime}^iA' + \sncov \\
        K_{\ftime + 1}^i &= C^i\kcov_{\ftime+1 | \ftime}^i (C^i)' + \mncovi \\
        \kstate_{\ftime+1}^i &= A\kstate^i + \kcov_{\ftime+1|\ftime}^i(C^i)'(K^i_{\ftime+1|\ftime})^{-1}(\measi_{\ftime+1} - C^iA\kstate^i_{\ftime}) \\
        \kcov^i_{\ftime+1} &= \kcov^i_{\ftime+1|\ftime} - \kcov^i_{\ftime+1|\ftime}(C^i)'(K_{\ftime+1}^i )^{-1}C^i\kcov^i_{\ftime+1|\ftime}
    \end{split}
\end{align*}
Under the assumption that the model parameters in \eqref{lin_gaus} satisfy $[A,C^i]$ is detectable and $[A,\sqrt{\sncov}]$ is stabilizable, the asymptotic predicted covariance $\kcov^i_{\ftime+1 | \ftime}$ as $k \to \infty$ is the unique non-negative definite solution of the \textit{algebraic Riccatti equation} (ARE): 
\begin{align*}
    \begin{split}
    \label{eq:ARE}
        &\ARE := \\
        &- \kcov + A(\kcov - \kcov (C^i)'[C^i\kcov (C^i)' + \mncovi]^{-1}C^i\kcov)A' \\& \quad + \sncov = 0
    \end{split}
\end{align*}
 Suppose each radar aims to optimize its unique objective function $f^i$ in isolation. Let $\kcov_{\stime}^{* -1}(\lconst,\varti)$ denote the solution of the ARE. Also suppose that the radar can only expend sufficient resources to ensure that the precision (inverse covariance) is at most some pre-specified precision $\bar{\kcov}^{-1}$. The radar would then adaptively choose the best waveform, corresponding one-to-one with the measurement noise covariance spectrum $\varti$, to meet the objective while satisfying this resource constraint, i.e.
  \begin{equation}
  \label{eq:nl_con_opt}
     \respi \in \arg\max_{\var} f^i(\var) \ : \ \kcov_{\stime}^{* -1}(\lconst,\varti) \leq \bar{\kcov}^{-1}
 \end{equation}
 and by Lemma 3 of \cite{krishnamurthy2020identifying} we can recover a linear constraint from this formulation, i.e.
 \[\kcov_{\stime}^{* -1}(\lconst,\varti) \leq \bar{\kcov}^{-1} \Rightarrow \lconst'\varti \leq 1\] The key idea behind this equivalence is to show the asymptotic precision $\kcov^*_n(\lconst,\cdot)$ is monotone increasing in the second argument $\varti$ using the information Kalman filter formulation. Thus, we can abstract \eqref{eq:nl_con_opt} to the following optimization with linear constraint
 \begin{equation}
     \respi \in \arg\max_{\var} f^i(\var) \ : \ \lconst'\var \leq 1
 \end{equation}

 \subsection{Multi-Objective Spectral Optimization}
Now we consider the case when the radar network is jointly constrained by a total power bound.  Since increased power output for the $i$'th radar signal corresponds directly to increased measurement $i$ precision, we can abstract the joint power constraint among all $\numagents$ radars to 
$\lconst' (\sum_{i=1}^{\numagents}\varti ) \leq p^*$
where $p^*$ is the constraint on total network power output. In this case, the cognitive radar network optimization problem becomes \eqref{def:coord_eq}.

Let us make the assumption that $\varti > 0 \ \forall t \in [T], i \in [\numagents]$, i.e., each radar always outputs a non-zero power signal. In the Appendix we prove a technical Lemma which allows us to make the following equivalence: \eqref{eq:effrel} together with \eqref{lemeq:mu0} implies that there exists $\sw \in \psimplex$ such that the expression \eqref{def:coord_eq} is equivalent to 
\begin{align}
\begin{split}
\label{eq:moo}
    \{\varti\}_{i=1}^{\numagents} \in &\arg\max_{\{\argo^i\}_{i=1}^{\numagents}} \sum_{i=1}^{\numagents} \sw^i f^i(\argo^i) \\ 
     &s.t. \quad \lconst' (\sum_{i=1}^{\numagents}\argo^i ) \leq p^*
\end{split}
\end{align}

Recall that we are interested in the inverse multi-objective optimization problem; in the following section we provide a necessary and sufficient condition for the existence of objective functions for which the observed signals $\{\varti\}_{i=1}^{\numagents}$ satisfy constrained multi-objective optimization.
\section{Detection of Coordination} 
\label{sec:MOO_detector}

First we provide the equivalence of cognitive radar network coordination (Def.~\ref{def:coord}) to a linear program formulation, and a subsequent objective function reconstruction equation. We then utilize this in a statistical detector for determining whether noisy network responses are consistent with multi-objective optimization (coordination).Finally we provide an algorithm for implementing this detector and objective function reconstruction. We assume the target can observe the signals $\{\varti, t\in[T]\}_{i=1}^{\numagents}$ through e.g., an omni-directional receiver.

\subsection{Equivalence to Linear Program}
Suppose we have the dataset of constraints and system responses $\dataset = \{\lconst, \{\respi\}_{i=1}^{\numagents}, t \in [T] \}$. Here we provide a necessary and sufficient condition for the dataset $\dataset$ to be consistent with multi-objective optimization.
\begin{theorem}
 \label{thm:cherchye1}
    Let $\dataset$ be a set of observations. The following are equivalent:
    \begin{enumerate}
    \item there exist a set of $M$ concave and continuous objective functions $U^1,\dots,U^m$, weights $\sw \in \psimplex$ and constraint $p^*$ such that $\forall t \in [T]$:
    \begin{align}
    \begin{split}
    \label{thm1:rat}
        \{\respi\}_{i=1}^{\numagents} \in &\arg\max_{\{\argo^i\}_{i=1}^{\numagents}} \sum_{i=1}^{\numagents} \sw^i U^i(\argo^i) \\
        &s.t. \quad \lconst' (\sum_{i=1}^{\numagents}\argo^i ) \leq p^*
    \end{split}
    \end{align}
    \item there exist numbers $u_j^i > 0, \lambda_j^i > 0$ such that for all $s,t \in [T]$, $i \in [M]$: 
    \begin{equation}
    \label{af_ineq}
        u_s^i - u_t^i - \lambda_t^i\lconst'[\var_s^i - \varti] \leq 0
    \end{equation}
    \end{enumerate}
\end{theorem}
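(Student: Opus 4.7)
The plan is to view Theorem~\ref{thm:cherchye1} as a multi-agent lift of Afriat's classical theorem in which the constraint vector $\lconst$ is shared across all agents, while each agent $i$ contributes its own observation sequence $\{\varti\}_t$. The equivalence then follows two familiar moves: first-order optimality plus concavity in the forward direction, and an explicit piecewise-affine construction plus the weighted-sum / Pareto correspondence of (\ref{eq:effrel}) in the reverse direction.

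For (1) $\Rightarrow$ (2), fix a rationalizing tuple $(U^1,\dots,U^{\numagents},\sw,p^*)$ and form the Lagrangian
\begin{equation}
L_t(\argo,\lambda) = \sum_{i=1}^{\numagents} \sw^i U^i(\argo^i) - \lambda\Bigl(\lconst'\sum_{i=1}^{\numagents}\argo^i - p^*\Bigr)
\end{equation}
at each $t$. Because the joint constraint is a single scalar inequality, KKT stationarity in each coordinate block $\argo^i$ delivers a \emph{single} multiplier $\lambda_t>0$ (positive once the constraint is taken to bind under local non-satiation) satisfying $\sw^i\nabla U^i(\varti) = \lambda_t\lconst$ for every $i$. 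Concavity of $U^i$ then yields the supporting-hyperplane bound $U^i(\var_s^i) \leq U^i(\varti) + (\lambda_t/\sw^i)\lconst'(\var_s^i - \varti)$, so the assignments $u_t^i := U^i(\varti)$ and $\lambda_t^i := \lambda_t/\sw^i$ satisfy the Afriat inequalities (\ref{af_ineq}).

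For (2) $\Rightarrow$ (1), given Afriat numbers $(u_t^i,\lambda_t^i)$, define the per-agent utility
\begin{equation}
U^i(\argo) := \min_{t \in [T]}\bigl\{u_t^i + \lambda_t^i\,\lconst'(\argo - \varti)\bigr\},
\end{equation}
which is concave and continuous as a pointwise minimum of finitely many affine functions. The inequalities (\ref{af_ineq}) force $U^i(\varti) = u_t^i$, and for any $\argo$ with $\lconst'\argo \leq \lconst'\varti$ the $t$-th affine piece gives $U^i(\argo) \leq u_t^i = U^i(\varti)$; hence $\varti$ solves the individual program at budget $\lconst'\varti$. Summing the individual budgets and setting $p^* := \lconst'\sum_i \varti$, the profile $\{\varti\}_i$ is Pareto optimal in (\ref{def:coord_eq}), because any Pareto improvement would require some agent to strictly exceed its individual budget in direction $\lconst$, forcing (by strict $\lconst$-monotonicity of the constructed $U^j$) a strict utility loss elsewhere. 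The PWS inclusion (\ref{eq:effrel}) together with the appendix lemma then produces weights $\sw \in \psimplex$, where strict positivity uses the standing assumption $\varti > 0$, yielding the weighted-sum rationalization (\ref{thm1:rat}).

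The main obstacle is the final step of the reverse direction: extracting a \emph{single} weight vector $\sw \in \psimplex$ that works uniformly across all $t$, rather than a $t$-dependent family of weights. The per-agent Afriat construction is classical, but reconciling the individual-problem multipliers $\lambda_t^i$ so that the ratios $\lambda_t^i : \lambda_t^j$ are constant in $t$ --- the algebraic condition for a common $\sw$ --- requires both the freedom in the Afriat numbers and the sharper inclusion $\bigcup_{\sw \in \psimplex}\woptset \subseteq \effset$ of (\ref{eq:effrel}). This is precisely where the appendix lemma and the assumption $\varti > 0$ intervene: strict positivity rules out corner solutions and keeps the optimum in the interior of the Pareto frontier, where strictly positive-weight representations are available.
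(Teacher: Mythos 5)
The paper itself does not prove this theorem---it defers to Theorem~1 of \cite{snow2022identifying}---so there is no line-by-line comparison to make; your proposal must stand on its own, and it does not quite. Your architecture is the right one (the Afriat/Cherchye-style argument: supergradient inequalities with a common KKT multiplier in the forward direction; the piecewise-affine $\min$ construction, individual budget optimality, Pareto efficiency, and scalarization in the reverse direction), and the forward direction is essentially sound once you replace $\nabla U^i$ by a supergradient selection (concavity and continuity do not give differentiability) and add constants to force $u_t^i>0$. The reverse direction is also correct up through Pareto efficiency of $\{\varti\}_i$ at each fixed $t$: your argument that a Pareto improvement would force $\lconst'\gamma^i \geq \lconst'\varti$ for all $i$ with strict inequality for some $k$, contradicting the joint budget, is valid because the constructed $U^i$ are strictly increasing.

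The genuine gap is the step you yourself label ``the main obstacle'': extracting a \emph{single} weight vector $\sw\in\psimplex$ valid for all $t$. The scalarization inclusion \eqref{eq:effrel} applied to the Pareto-efficient profile at time $t$, together with the appendix lemma and $\varti>0$, yields some $\sw_t\in\psimplex$ --- but a priori a different one for each $t$. For the profile to maximize $\sum_i\sw^iU^i$ at every $t$ with one $\sw$, you need (via the common-multiplier optimality condition for the separable concave objective) that $\sw^i\lambda_t^i$ is independent of $i$ for every $t$, i.e.\ that the ratios $\lambda_t^i:\lambda_t^j$ can be chosen constant in $t$. Feasibility of \eqref{af_ineq} is agent-by-agent and does not obviously permit such a choice: the per-agent Afriat cones can impose incompatible constraints on the ratios $\lambda_s^i/\lambda_t^i$ across agents, and per-agent rescalings $(u^i,\lambda^i)\mapsto(c^iu^i,c^i\lambda^i)$ only adjust the ratios by a $t$-independent factor. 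You assert that \eqref{eq:effrel} plus the appendix lemma ``produces weights,'' but neither addresses the uniformity in $t$, which is precisely the content that distinguishes the multi-agent theorem from $M$ independent copies of Afriat's theorem. A complete proof must either construct Afriat numbers with $t$-independent cross-agent ratios or argue that the single-$\sw$ rationalization holds for the constructed $U^i$ by some other route. A secondary, smaller issue: you set $p^*:=\lconst'\sum_i\varti$, which depends on $t$; the single $p^*$ in the statement relies on the paper's normalization $\lconst'\sum_i\varti=1$ for all $t$, which you should invoke explicitly.
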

\begin{proof}
See Theorem 1 of \cite{snow2022identifying}
\end{proof}
This allows us to simply solve the linear program feasibility test \eqref{af_ineq} to test for multi-objective optimization. Specifically, given the equivalence of \eqref{eq:moo} and \eqref{def:coord_eq}, we can use this linear programming formulation to test for coordination in the cognitive radar network. 

\begin{corollary}
\label{cor:Utrec}
Given constants $u_t^i, \lambda_t^i, t\in[T],i\in[M]$ which make \eqref{af_ineq} feasible, explicit monotone and continuous objective functions that "rationalize" the dataset $\{\lconst, \respi, t \in [T], i \in [M]\}$ are given by
\begin{equation}
\label{eq:Utrec}
     U^i(\cdot) = \min_{t \in [T]} \left[u_t^i + \lambda_t^i\lconst'[\cdot - \respi] \right]
\end{equation}
i.e., \eqref{thm1:rat} is satisfied.
\end{corollary}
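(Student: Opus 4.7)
The plan is to verify the three ingredients implicit in ``rationalizing the dataset'': (a) each $U^i$ defined by \eqref{eq:Utrec} is concave, continuous, and coordinatewise nondecreasing; (b) $U^i(\var_s^i)=u_s^i$ at every observation; and (c) for every $t$, the tuple $\{\respi\}_{i=1}^{\numagents}$ maximizes $\sum_i \sw^i U^i(\cdot)$ under the budget $\lconst'(\sum_i \argo^i)\le p^*$, with weights $\sw\in\psimplex$ and budget $p^*$ as supplied by Theorem~\ref{thm:cherchye1}.

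Items (a) and (b) are bookkeeping. For (a), $U^i$ is a pointwise minimum of the finite family of affine maps $x\mapsto u_t^i+\lambda_t^i\lconst'(x-\respi)$; such a minimum is automatically concave and continuous, and since each gradient $\lambda_t^i\lconst$ lies in $\reals^{\dimn}_+$ (using $\lambda_t^i>0$ and $\lconst\in\reals^{\dimn}_+$), the coordinatewise monotonicity of each affine piece transfers to the minimum. For (b), evaluating \eqref{eq:Utrec} at $\var_s^i$ with $t$ as dummy gives
\begin{equation*}
U^i(\var_s^i)=\min_{t\in[T]}\bigl[u_t^i+\lambda_t^i\lconst'(\var_s^i-\respi)\bigr],
\end{equation*}
which is bounded below by $u_s^i$ term by term via \eqref{af_ineq}, with equality attained at $t=s$, so the minimum equals $u_s^i$.

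The substantive step is (c). Fix $t$. By (b), the weighted objective at $\{\respi\}$ equals $\sum_i \sw^i u_t^i$. For any $\{y^i\}$ feasible in the time-$t$ program, retaining only the single $t$-indexed piece inside the minimum defining $U^i$ yields the affine upper bound $U^i(y^i)\le u_t^i+\lambda_t^i\lconst'(y^i-\respi)$, and weighting by $\sw^i$ and summing gives
\begin{equation*}
\sum_{i=1}^{\numagents}\sw^i U^i(y^i)\le \sum_{i=1}^{\numagents}\sw^i u_t^i+\sum_{i=1}^{\numagents}\sw^i\lambda_t^i\lconst'(y^i-\respi).
\end{equation*}
It therefore suffices that the residual sum on the right be nonpositive over the feasible set. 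This residual collapses to $c_t\lconst'(\sum_i y^i-\sum_i \respi)\le c_t(p^*-p^*)=0$ provided (i) the constraint binds at the observed tuple, i.e.\ $\lconst'\sum_i \respi=p^*$, and (ii) the Pareto--Afriat compatibility $\sw^i\lambda_t^i=c_t$ holds, where $c_t$ is independent of $i$ and plays the role of a single KKT multiplier across agents. Ingredient (ii) is the main obstacle; it is secured by exploiting the per-agent scale invariance of \eqref{af_ineq} (one may rescale each $(u^i_\cdot,\lambda^i_\cdot)$ by a positive constant without disturbing feasibility) and choosing $\sw\in\psimplex$ so that $\sw^i\lambda_t^i$ is constant in $i$ simultaneously for all $t\in[T]$---the very construction carried out in Theorem~1 of \cite{snow2022identifying} and imported by Theorem~\ref{thm:cherchye1}. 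Combining the two display inequalities above with this compatible choice completes the verification.
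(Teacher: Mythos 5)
Your items (a) and (b) are correct and are the standard Afriat bookkeeping. The gap is in item (c), specifically your ingredient (ii). Because \eqref{thm1:rat} requires a \emph{single} weight vector $\sw\in\psimplex$ valid for every $t$, you need $\sw^i\lambda_t^i=c_t$ for all $i$ and all $t$ simultaneously. Absorbing your per-agent rescaling constant into $\sw^i$, this forces $\lambda_t^i=c_t/\nu^i$, i.e.\ the multipliers must factor into a $t$-dependent part times an $i$-dependent part. Feasibility of \eqref{af_ineq} gives no such structure: the inequalities couple $(u_t^i,\lambda_t^i)$ across $t$ within each agent, so the only invariance available is multiplication of the \emph{entire} trajectory $(u_\cdot^i,\lambda_\cdot^i)$ by one positive constant per agent. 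That can normalize, say, $\lambda_1^i$ across agents, but it cannot make the ratios $\lambda_t^i/\lambda_t^j$ independent of $t$, which is exactly what $\sw^i\lambda_t^i=c_t$ demands. So the ``compatible choice'' of $\sw$ you invoke does not exist for a generic feasible solution of \eqref{af_ineq}, and your final display does not close.

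The repair is to not insist on $t$-independent weights. From \eqref{af_ineq} one gets directly that $U^i(y)\le u_t^i+\lambda_t^i\lconst'(y-\respi)\le u_t^i=U^i(\respi)$ whenever $\lconst'y\le\lconst'\respi$, i.e.\ each radar's observed response maximizes its reconstructed objective over its \emph{individual} implicit budget set. Pareto efficiency of the joint response under the aggregate budget then follows by the usual summation/local-nonsatiation argument: any weak improvement forces $\lconst'y^i\ge\lconst'\respi$ for every $i$, a strict improvement for some $i$ forces strict inequality there, and summing violates the aggregate constraint. Equivalently, one runs your weighted-sum argument with the \emph{time-varying} weights $\sw_t^i\propto 1/\lambda_t^i$, which is precisely the choice that equalizes $\sw_t^i\lambda_t^i$ across $i$; this is the route taken in Lemma~1 of \cite{snow2022identifying}, to which the paper defers, and it delivers efficiency in the sense of \eqref{eq:effset} rather than maximization of a single $t$-independent weighted sum. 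Your ingredient (i) (a common binding budget $p^*$ across $t$) also needs a word: it holds only after the normalization $\lconst'(\sum_i\respi)=1$ adopted in Section~\ref{sec:MOO}, not automatically.
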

\begin{proof}
See Lemma 1 of \cite{snow2022identifying}.
\end{proof}

This Corollary provides us with a mechanism for reconstructing objective functions which rationalize the observed responses, giving us a way to predict future cognitive radar network outputs. 

Recall that thus far we have considered only deterministic radar $i$ responses $\varti$. We now consider the case when these measured responses are corrupted by noise. We next provide a statistical detector for determining whether these \textit{noisy} responses are consistent with multi-objective optimization, with theoretical guarantees on Type-I error. We then provide a general scheme for reconstructing objective functions which most closely rationalize the observed noisy responses. 

\subsection{Statistical Detector}
\label{sec:StatDet}

 Let $\barbeta$ denote the dataset when the radar responses are  observed in noise:
\begin{equation}
\label{eq:dataset}
    \barbeta = \{\lconst, \nrespi , t \in [T], i \in [\numagents]\}
\end{equation}
where $\nrespi = \respi + \noisei_t$, and $\noise^i_t$ are i.i.d. and distributed according to some distribution $\ndistr^i_t$. 
We propose a statistical detector to optimally determine if the responses are consistent with Pareto optimality \eqref{eq:MOP}. Define \\
$H_0$: null hypothesis that the dataset \eqref{eq:dataset} arises from the optimization problem \eqref{def:coord_eq}. \\
$H_1$: alternative hypothesis that the dataset \eqref{eq:dataset} does not arise from the optimization problem \eqref{def:coord_eq}. 

There are two possible sources of error: \\
\textbf{Type-I error}: Reject $H_0$ when $H_0$ is valid.\\
\textbf{Type-II error}: Accept $H_0$ when $H_0$ is invalid. 

We formulate the following test statistic $\optstat$, as a function of $\barbeta$, to be used in the detector: 
\begin{equation}
\label{eq:optstat}
\optstat = \max_i \optstati
\end{equation}
where $\optstati$ is the solution to:
\begin{align}
\begin{split}
\label{eq:LP}
&\min \stati : \exists \ u_t^i > 0, \lambda_t^i > 0 : \\
&u_s^i - u_t^i - \lambda_t^i \lconst'(\bar{\var}^i_s - \bar{\var}_t^i) - \lambda_t^i \stati \leq 0 
\end{split}
\end{align}
Form the random variable $\rvtest$ as 
\begin{align}
\begin{split}
\label{eq:psimax}
&\rvtest = \max_i \rvtesti \\
&\rvtesti = \max_{t\neq s}[\lconst'(\noisei_t - \noisei_s)]
\end{split}
\end{align}
Then we propose the following statistical detector (with $\gamma \in (0,1)$):
\begin{equation}
\label{eq:stat_test}
\int_{\optstat}^{\infty} f_{\rvtest}(\psi)d\psi 
\begin{cases}
\geq \gamma \Rightarrow H_0 \\ < \gamma \Rightarrow H_1
\end{cases}
\end{equation}
where $f_{\rvtest}(\cdot)$ is the probability density function of $\rvtest$. Let $\fcdf$ be the cdf of $\rvtest$ and $\fccdf$ be the complementary cdf of $\rvtest$. Then we have the following guarantees:

\begin{theorem}
\label{thm:stat_det}
Consider the noisy dataset $\eqref{eq:dataset}$, and suppose \eqref{eq:LP} has a feasible solution. Then 
\begin{enumerate}
    \item The following null hypothesis equivalence holds:
    \begin{equation}
    \label{eq:H0equiv} H_0 \Longleftrightarrow \bigcap_{i \in [M]} \{\optstati \leq \rvtesti\} \end{equation}
    \item The probability of Type-I error (false alarm) is 
    \[ \Popt(H_1 | H_0) := \Prob(\fccdf(\optstat) \leq \gamma \ | H_0) \leq \gamma \]

    \item The optimizer $\optstat$ yields the smallest Type-I error bound:
    \begin{align*}
    \begin{split}
    &\Prob_{\bar{\boldsymbol{\Phi}}(\barbeta)}(H_1 | H_0) \geq \Popt(H_1 | H_0) \quad \\
    &\quad \forall \bar{\boldsymbol{\Phi}}(\barbeta) \in [\optstat, \rvtest]
    \end{split}
    \end{align*}
    
\end{enumerate}
\end{theorem}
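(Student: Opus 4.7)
The plan is to decompose the proof into three pieces, unified by the substitution $\var^i_t = \bar{\var}^i_t - \noisei_t$, which translates the deterministic Afriat inequalities of Theorem~\ref{thm:cherchye1} into statements about the perturbed LP~\eqref{eq:LP}, followed by an application of the probability integral transform.

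For Part~1, in the forward direction I would assume $H_0$ and invoke Theorem~\ref{thm:cherchye1} to obtain $u_t^i, \lambda_t^i > 0$ satisfying $u_s^i - u_t^i - \lambda_t^i\lconst'(\var^i_s - \var^i_t) \leq 0$ for all $s,t,i$, where $\var^i_s, \var^i_t$ are the true (denoised) responses. Substituting $\var^i_t = \bar{\var}^i_t - \noisei_t$ and rearranging gives $u_s^i - u_t^i - \lambda_t^i\lconst'(\bar{\var}^i_s - \bar{\var}^i_t) \leq \lambda_t^i\lconst'(\noisei_t - \noisei_s) \leq \lambda_t^i \rvtesti$. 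Hence $(u,\lambda,\stati = \rvtesti)$ is feasible for~\eqref{eq:LP}, forcing $\optstati \leq \rvtesti$. The reverse direction reverses this derivation: starting from optimizers $(u^*,\lambda^*,\optstati)$ of~\eqref{eq:LP}, the bound $\optstati \leq \rvtesti$ is used to reconstruct constants satisfying Afriat's inequalities for the true responses, after which Theorem~\ref{thm:cherchye1} yields $H_0$.

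For Part~2, Part~1 gives $\optstat = \max_i \optstati \leq \max_i \rvtesti = \rvtest$ almost surely under $H_0$. The non-increasing property of $\fccdf$ then yields $\{\fccdf(\optstat) \leq \gamma\} \subseteq \{\fccdf(\rvtest) \leq \gamma\}$. Since $\rvtest$ is continuous with cdf $\fcdf$, the probability integral transform makes $\fccdf(\rvtest)$ uniform on $[0,1]$, so $\Prob(\fccdf(\rvtest) \leq \gamma) = \gamma$, giving $\Popt(H_1 | H_0) \leq \gamma$. For Part~3, any $\bar{\Phi}(\barbeta) \in [\optstat, \rvtest]$ satisfies $\fccdf(\bar{\Phi}) \leq \fccdf(\optstat)$ by the same monotonicity, so the event $\{\fccdf(\optstat) \leq \gamma\}$ is contained in $\{\fccdf(\bar{\Phi}) \leq \gamma\}$ pointwise, and taking probabilities yields $\Prob_{\bar{\Phi}}(H_1 | H_0) \geq \Popt(H_1 | H_0)$.

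The hard part will be the reverse direction of Part~1: the LP constraint bounds residuals by a single scalar slack $\stati$, while Afriat's theorem requires a tight per-pair bound, and the hypothesis $\optstati \leq \rvtesti$ controls only the pairwise maximum of noise differentials rather than each pair separately. Denoising the LP solution back to the full Afriat system likely requires exploiting the rescaling freedom in $(u_t^i,\lambda_t^i)$ or a complementary slackness argument at the LP optimum. Once Part~1 is established, Parts~2 and~3 are immediate consequences of monotonicity of $\fccdf$ and the probability integral transform.
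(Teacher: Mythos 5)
Your forward direction of Part~1 and your treatment of Parts~2 and~3 coincide with the paper's proof: substitute the noise into the Afriat inequalities \eqref{af_ineq} to exhibit $(\bar{u}_t^i,\bar{\lambda}_t^i,\Phi^i=\rvtesti)$ as feasible for \eqref{eq:LP}, conclude $\optstati\leq\rvtesti$, then combine monotonicity of $\fccdf$ with the probability integral transform (uniformity of $\fccdf(\rvtest)$ on $[0,1]$) to obtain the $\gamma$ bound and the optimality claim in Part~3. No substantive difference there.

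The difficulty you flag in the reverse direction of Part~1 is genuine, and your proposal does not close it. Concretely, a feasible point of \eqref{eq:LP} with $\optstati\leq\rvtesti$ gives, after substituting $\bar{\var}_t^i=\var_t^i+\noisei_t$,
\begin{equation*}
u_s^i-u_t^i-\lambda_t^i\lconst'(\var_s^i-\var_t^i)\;\leq\;\lambda_t^i\bigl[\lconst'(\noisei_s-\noisei_t)+\optstati\bigr],
\end{equation*}
and the right-hand side is nonpositive only if $\optstati\leq\lconst'(\noisei_t-\noisei_s)$ for \emph{every} pair $(s,t)$, i.e.\ only if $\optstati$ is bounded by the \emph{minimum} of the pairwise noise differentials; the hypothesis bounds it only by the \emph{maximum} $\rvtesti$, and roughly half of the pairwise differentials are negative. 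Neither the rescaling freedom in $(u_t^i,\lambda_t^i)$ nor complementary slackness obviously repairs this, since the obstruction is per-pair rather than a normalization issue. For what it is worth, the paper's own proof of this direction consists of applying the word ``similarly'' to the same substitution, so it carries exactly the defect you identified; your writeup is no weaker than the published argument, but as it stands neither establishes the $\Leftarrow$ implication of \eqref{eq:H0equiv}, and you should either supply the missing per-pair control or weaken Part~1 to the one-sided implication actually used in Parts~2 and~3 (which only require $H_0\Rightarrow\optstat\leq\rvtest$).
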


\begin{proof}
    See Appendix \ref{pf:stat_det}
\end{proof}

The contribution of this detector is that it provides a strict guarantee on the upper bound of probability of Type-I error; the specific choice of threshold $\gamma$ is left to any particular problem application and may vary depending on design criteria. 
\subsection{Statistical Detector Implementation and objective Reconstruction}

Here we present an implementable algorithm for detecting coordination in the radar network and reconstructing objective functions which most closely rationalize the observed noisy responses. 

In practice we would likely not have access to the density function $f_{\Psi}(\cdot)$. However, we would likely have some assumptions on the noise statistics captured by the distributions $\Lambda_t^i$, such as additive Gaussian noise. Thus, we propose to compute an approximation $\hat{F}_{\Psi}(\cdot)$ of the cumulative distribution function $F_{\Psi}(\cdot)$ using assumptions on the noise statistics, then implement the statistical detector through this. Algorithm 1 provides a practically feasible implementation of the statistical detector \eqref{eq:stat_test}.

Recall that Corollary \ref{cor:Utrec} gives us the ability to reconstruct objective functions for which the observed \textit{deterministic} responses $\{\varti\}_{i=1}^{\numagents}$ are consistent with multi-objective optimization. If the statistical detector suggests that the radar network is coordinating, i.e. $H_0$ holds, it would be in our interest to obtain these objective functions. We can do so by utilizing the parameters $\{u_t^i, \lambda_t^i, t\in[T],i\in[\numagents]\}$ which solve \eqref{eq:LP}. Note that due to the additive noise in the measured signals $\{\hat{\var}_t^i\}_{i=1}^{\numagents}$, Corollary \ref{cor:Utrec} does not guarantee that the signals can \textit{exactly} be rationalized by these reconstructed functions. However, using the parameters which solve \eqref{eq:LP} is a useful heuristic, and we demonstrate this validity of this reconstruction in a numerical example. 


\begin{algorithm}[t]
\label{alg:MOOdet}
\caption{Detecting Multi-Objective Optimization}
\begin{algorithmic}[1]
    \For{l=1:L}
        \For{i=1:M}
             \State simulate $\boldsymbol{\noise}^i_l = [\noise^i_1,\dots,\noise^i_N]^{(l)}, \quad \noise^i_t \sim \Lambda_t^i$
        \EndFor
        \State Compute $\Psi^{l} := \max_i \{\max_{t\neq s}[\lconst(\noise_t^i - \noise_s^i)]\}$
    \EndFor
    \State Compute $\hat{F}_{\Psi}(\cdot)$ from $\{\Psi^l\}_{l=1}^L$
\State Record radar network response $\barbeta$ to the probe $\lconst$
\State Solve \eqref{eq:optstat} for $\optstat$
\State Save $\mathcal{P} := \{\hat{u}_t^i, \hat{\lambda}_t^i, t \in [T], i \in [\numagents]\}$ such that
\begin{align*}
    \hat{u}_s^i - \hat{u}_t^i - \hat{\lambda}_t^i \lconst'(\bar{\var}^i_s - \bar{\var}_t^i) - \hat{\lambda}_t^i \optstati \leq 0 \ \forall i \in [\numagents]
\end{align*}
\State Implement detector \eqref{eq:stat_test} as
\begin{equation}
    1  - \hat{F}_{\Psi}(\optstat) \begin{cases}
         > \gamma \Rightarrow H_0 \\
         \leq \gamma \Rightarrow H_1
    \end{cases}
\end{equation}
\If {$H_0$}
 \State Reconstruct objective functions from \eqref{eq:NumRec}
\EndIf
\end{algorithmic}
\end{algorithm}


\section{Numerical Studies}
\label{sec:num_ex}

For our numerical examples we consider the case with $M=3$ radars, outputting signals $\varti \in \reals^2$, with objective functions given by 
\begin{align}
\begin{split}
\label{eq:NumUtils}
    &f^1(\var) = \textrm{det}(R^{-1}(\beta)) = \beta(1)\times\beta(2), \\
    &f^2(\beta) = \Tr(R^{-1}(\beta)) = 
    \beta(1)+\beta(2), \\
    &f^3(\beta) = \sqrt{\beta(1)}\beta(2)
\end{split}
\end{align}

\subsection{Statistical Detector Performance}

\begin{figure}
\centering
  \includegraphics[width=\linewidth,scale=0.2]{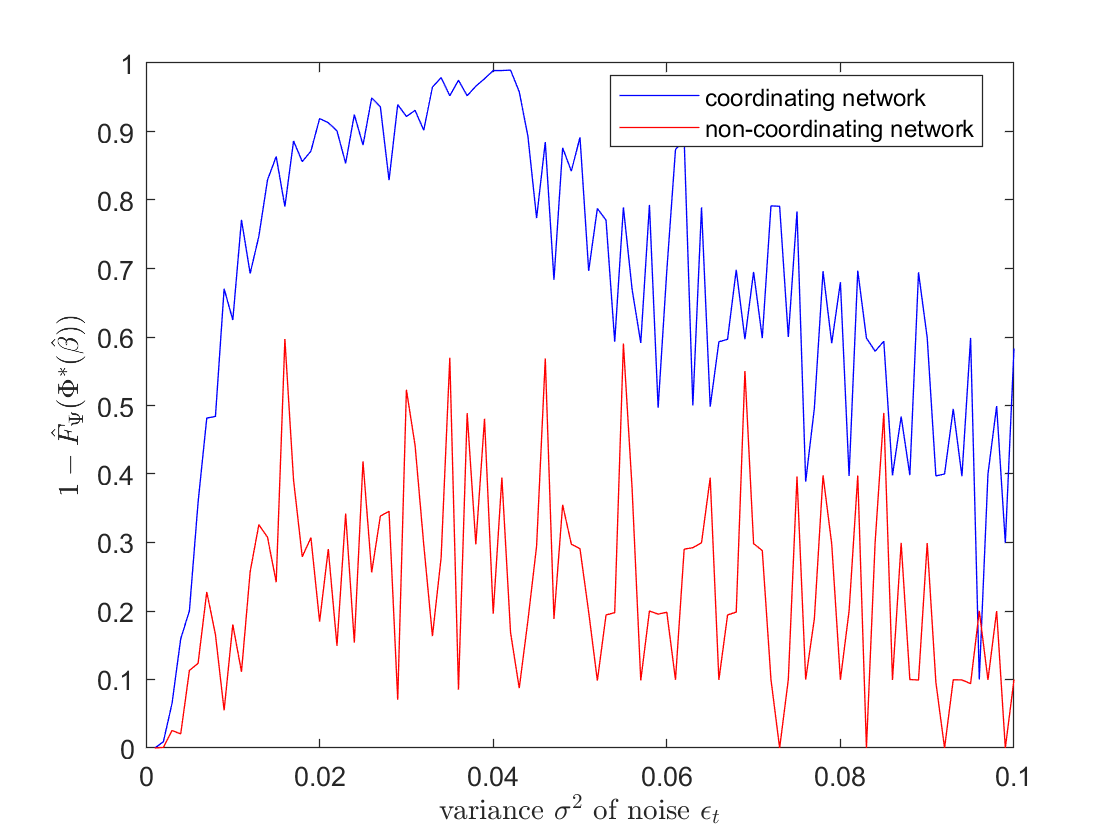}
  \caption{\small Statistic $1  - \hat{F}_{\Psi}(\optstat)$ as a function of variance of the noise distribution $\Lambda_t$. Higher $1  - \hat{F}_{\Psi}(\optstat)$ corresponds to higher likelihood of radar network coordination in the statistical detector \eqref{eq:stat_test}.}
  \label{fig:StatDet}
\end{figure}

Here we investigate the empirical behavior of the statistic $1  - \hat{F}_{\Psi}(\optstat)$ under both $H_0$ and $H_1$. We generate the statistic from the procedure outlined in Algorithm 1, with $L = 500$, $M = 3$, $T=10$. The probe signal $\lconst \in \reals^2$ is generated randomly as $\lconst \sim U[0.1,1.1]^2 $, i.e. each element of $\lconst$ is generated as an independent uniform random variable on the interval [0.1,1.1]. To simulate a cognitive radar network, the responses $\{\varti\}_{i=1}^{\numagents}$ are taken as solutions to the multi-objective optimization \eqref{def:coord} with objective functions given by \eqref{eq:NumUtils}, and $\sw^1=\sw^2=0.4,\sw^3=0.3$. Then noisy responses $\{\bar{\var}_t^i\}_{i=1}^{\numagents}$ are obtained by adding i.i.d. Gaussian noise $\noisei_t \sim \Lambda_t = \gaussN(0,\sigma^2)$. The blue line in Figure~\ref{fig:StatDet} displays the resultant empirical statistic $1  - \hat{F}_{\Psi}(\optstat)$ as a function of noise variance. 
To simulate a non-coordinating radar network, we generate each response $\varti \sim U[0,1]^2$ independently, and similarly add Gaussian measurement noise $\noisei_t \sim \Lambda_t = \gaussN(0,\sigma^2)$. The red line in Figure~\ref{fig:StatDet} is the empirical statistic $1  - \hat{F}_{\Psi}(\optstat)$ under these circumstances, when no coordination is present.

Let us interpret the simulation results displayed in Figure~\ref{fig:StatDet}. Observe that the statistic $1  - \hat{F}_{\Psi}(\optstat)$ is consistenly larger when the radar network coordinates. This validates our choice that the null hypothesis $H_0$ (coordination) should be chosen once the statistic surpasses a threshold. Under $H_0$ (coordination), the statistic begins to decrease as the noise variance increases. This intuitively should hold, since as the noise increases the signal structure imposed by the multi-objective optimization begins to degenerate. Also observe that the statistic goes to zero as the noise variance goes to zero. This is somewhat counter-intutitive, as one might think that in the deterministic limit (no noise) the detector should always be able to identify coordination. However, notice that as the noise variance goes to zero the cumulative distribution function $F_{\Psi}(\cdot)$ will resemble a unit step function, and, by the simulation, seems to  do so faster than the statistic $\optstat$ converges to zero. It would be interesting to investigate this phenomenon further. 
\subsection{Reconstructing objective Functions}

Figure~\ref{fig:utilities} displays the three optimized objective functions given in \eqref{eq:NumUtils} in the left column, and the three reconstructed objective functions $U^1(\cdot),U^2(\cdot),U^3(\cdot)$ given by
\begin{equation}
\label{eq:NumRec}
U^i(\cdot) = \min_{t \in [T]} \left[\hat{u}_t^i + \hat{\lambda}_t^i\lconst'[\cdot - \respi] \right]
\end{equation}
where $\{\hat{u}_t^i, \hat{\lambda}_t^i, t \in [T], i \in [\numagents]\}$ is taken from $\mathcal{P}$ in Algorithm 1. We note again that these reconstructed objective functions may not exactly rationalize the responses due to the additive noise, but as can be seen in Fig.~\ref{fig:utilities}, the heuristic \eqref{eq:NumRec} succeeds in approximating the true objective functions. An interesting future line of work is to theoretically analyze how well these reconstructed objective functions approximate the true objective functions, taking into account the noise statistics.

\begin{figure}[t]
        \centering
        \begin{subfigure}[]{0.23\textwidth} 
            \centering
            \includegraphics[width=\textwidth]{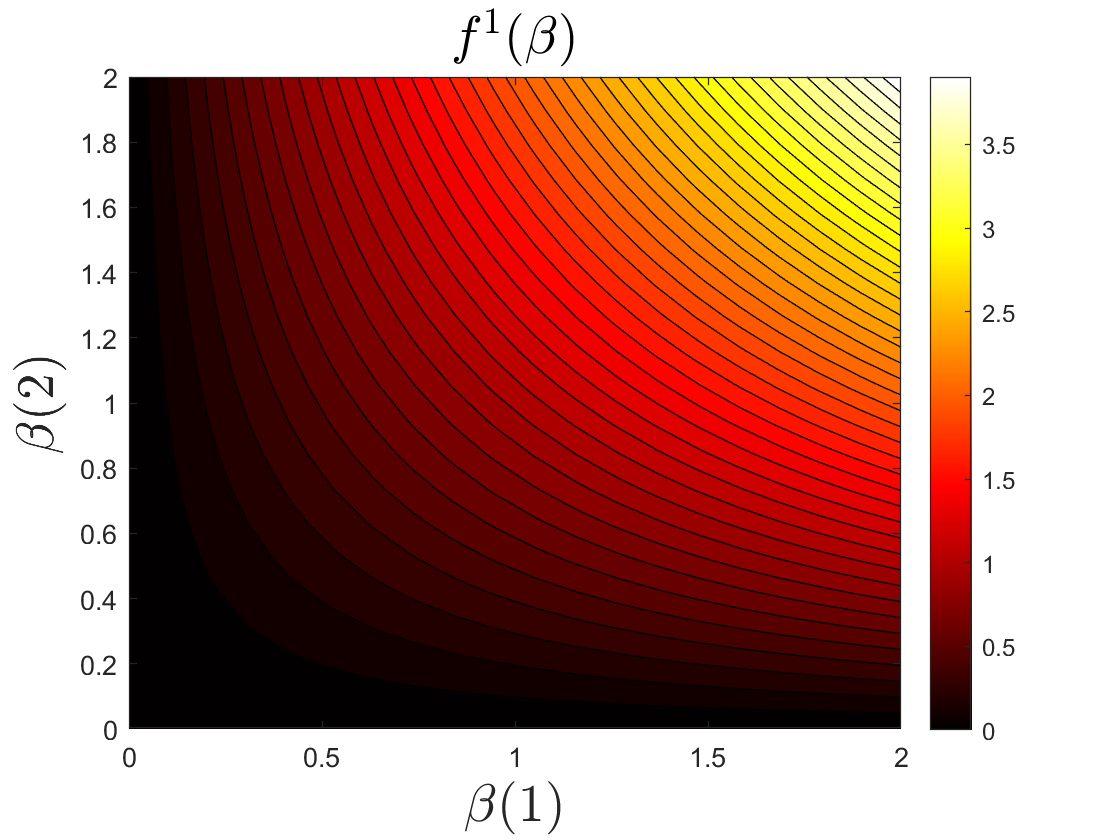}
            \caption[]%
            {{\scriptsize $f^1(\beta) = \textrm{det}(R^{-1}(\beta))$}}    
            \label{fig:T_util1}
        \end{subfigure}
        \hfill
        \begin{subfigure}[]{0.23\textwidth}  
            \centering 
            \includegraphics[width=\textwidth]{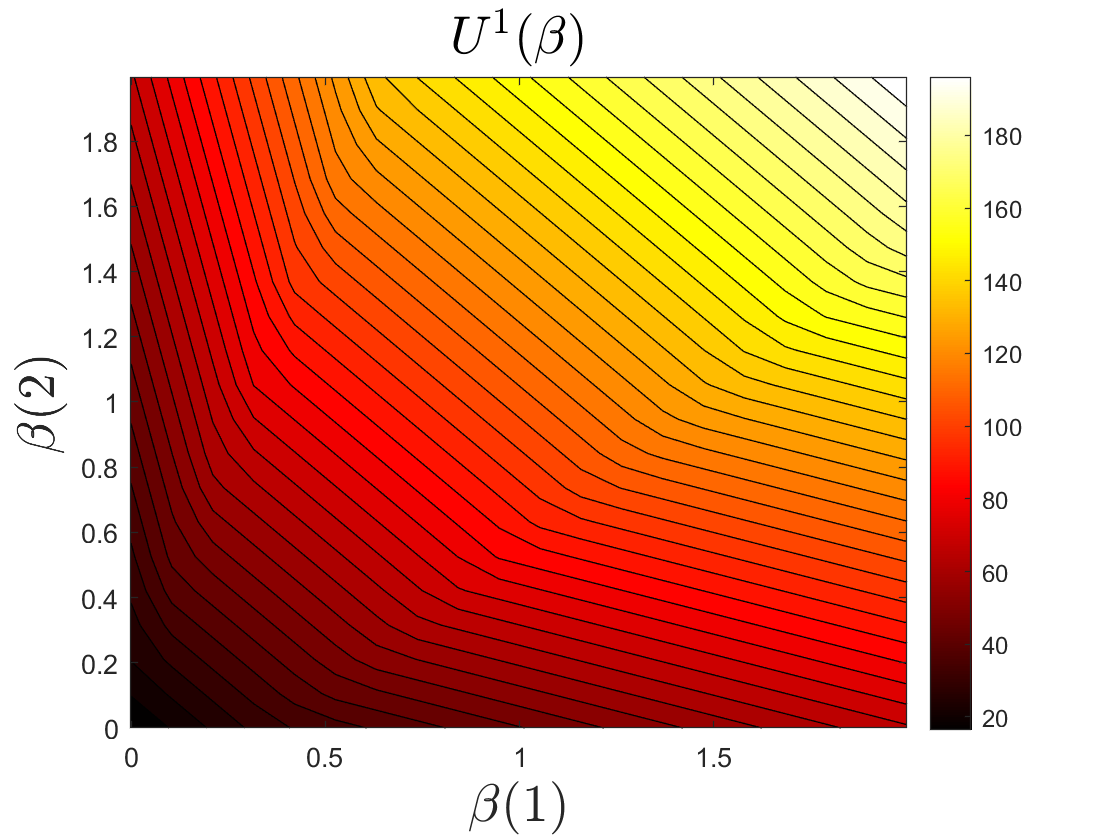}
            \caption[]%
            {{\scriptsize $U^1(\beta)$}}    
            \label{fig:util1}
        \end{subfigure}
        \vskip\baselineskip

        \begin{subfigure}[]{0.23\textwidth}   
            \centering 
            \includegraphics[width=\textwidth]{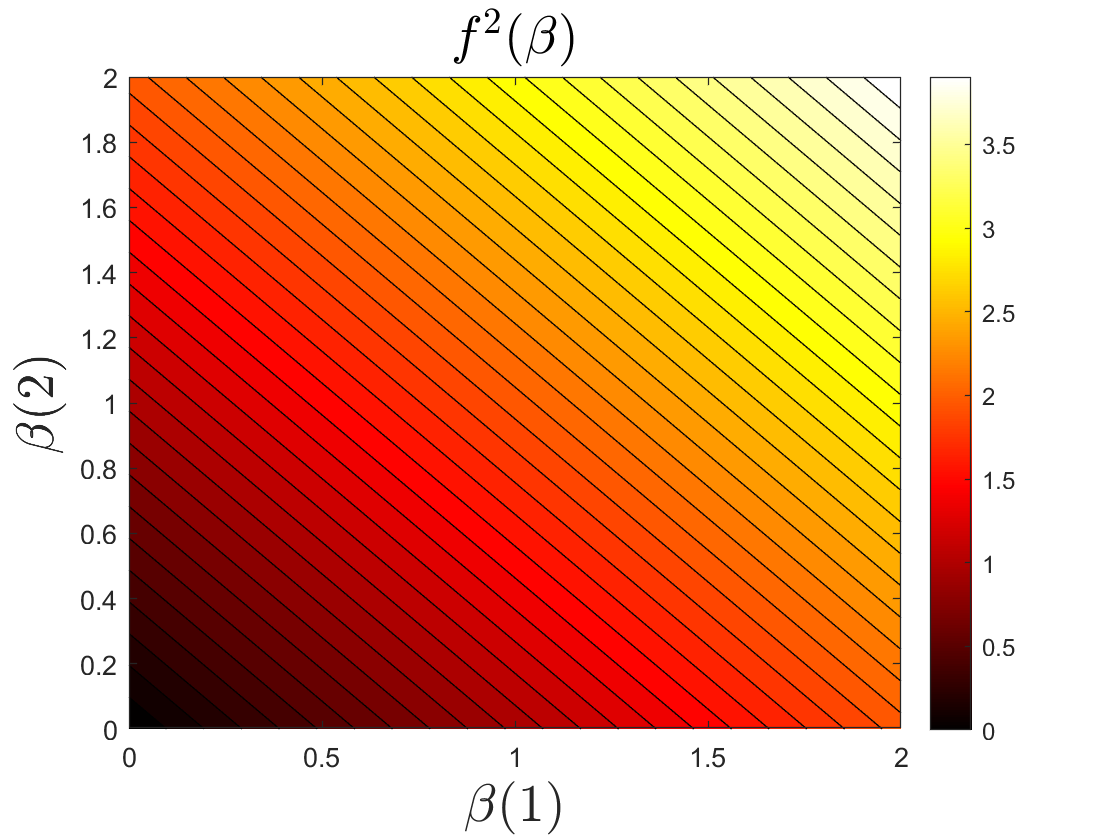}
            \caption[]%
            {{\scriptsize $f^2(\beta) = \textrm{Tr}(R^{-1}(\beta))$}}    
            \label{fig:T_util2}
        \end{subfigure}
        \hfill
        \begin{subfigure}[]{0.23\textwidth}   
            \centering 
            \includegraphics[width=\textwidth]{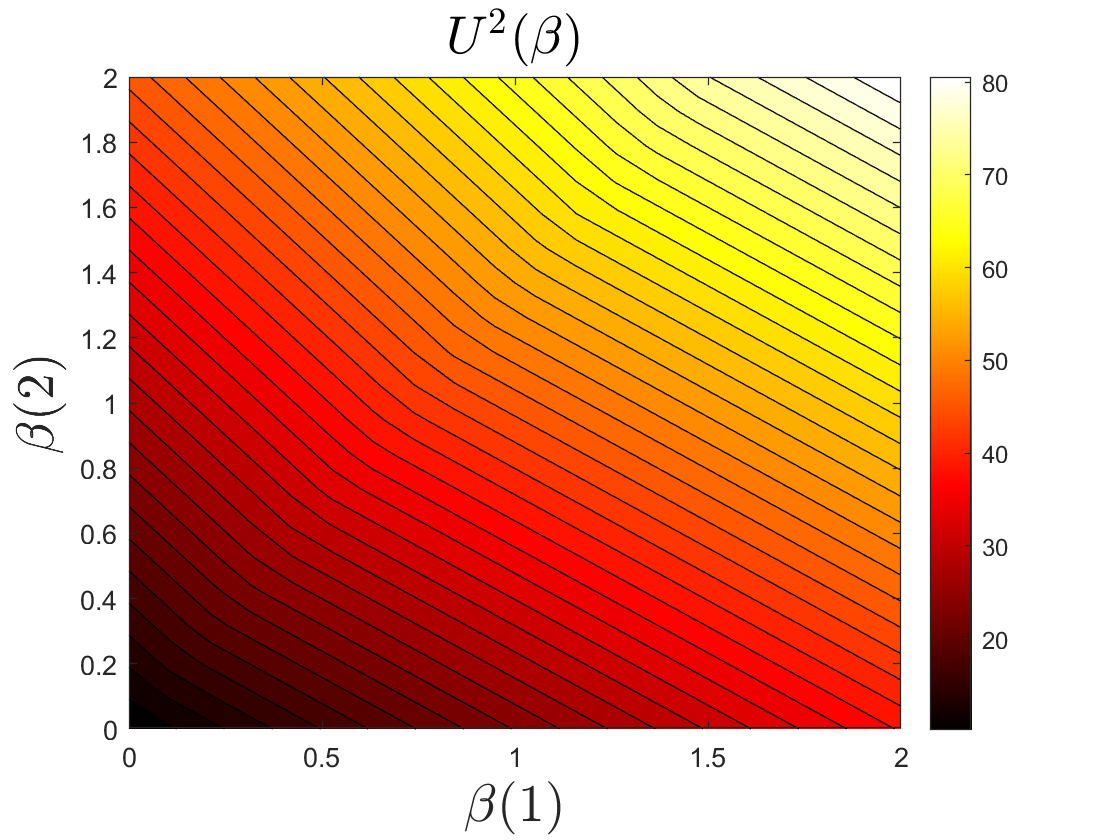}
            \caption[]%
            {{\scriptsize $U^2(\beta)$}}    
            \label{fig:util2}
        \end{subfigure}
        \vskip\baselineskip

        \begin{subfigure}[]{0.23\textwidth}   
            \centering 
            \includegraphics[width=\textwidth]{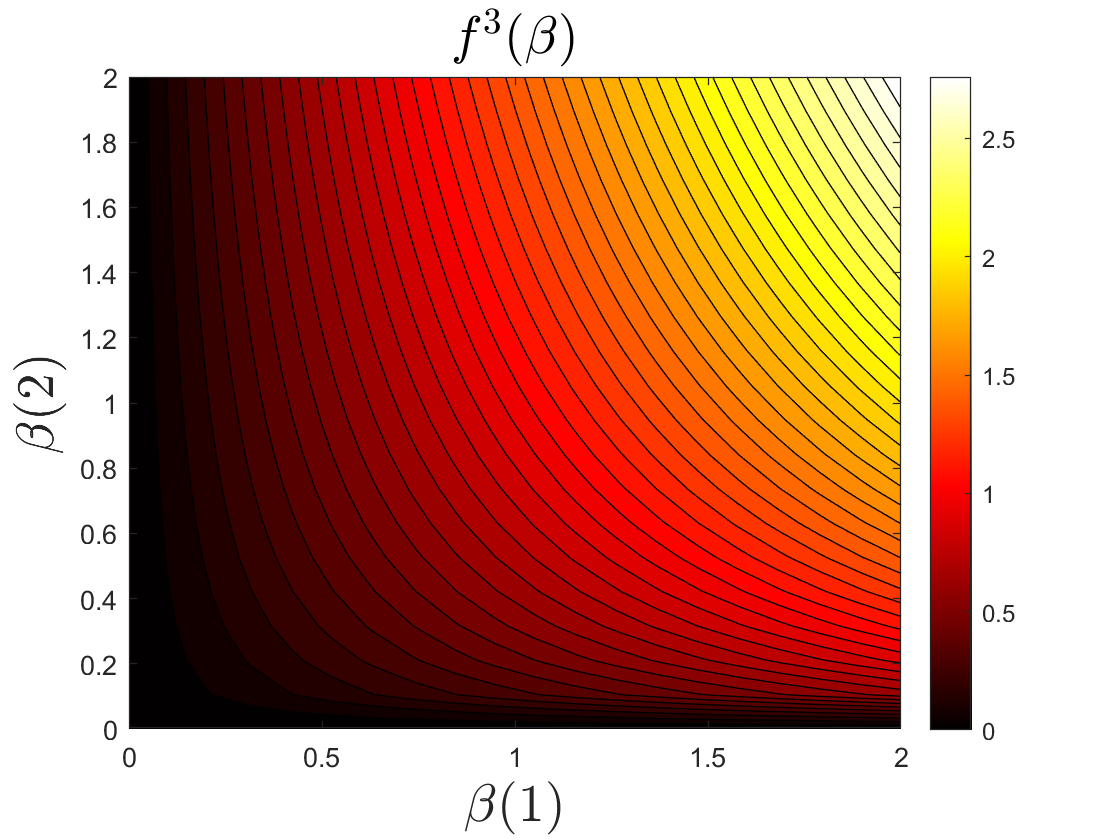}
            \caption[]%
            {{\scriptsize $f^3(\beta) = \sqrt{\beta(1)}\beta(2)$}}    
            \label{fig:T_util3}
        \end{subfigure}
        \hfill
        \begin{subfigure}[]{0.23\textwidth}   
            \centering 
            \includegraphics[width=\textwidth]{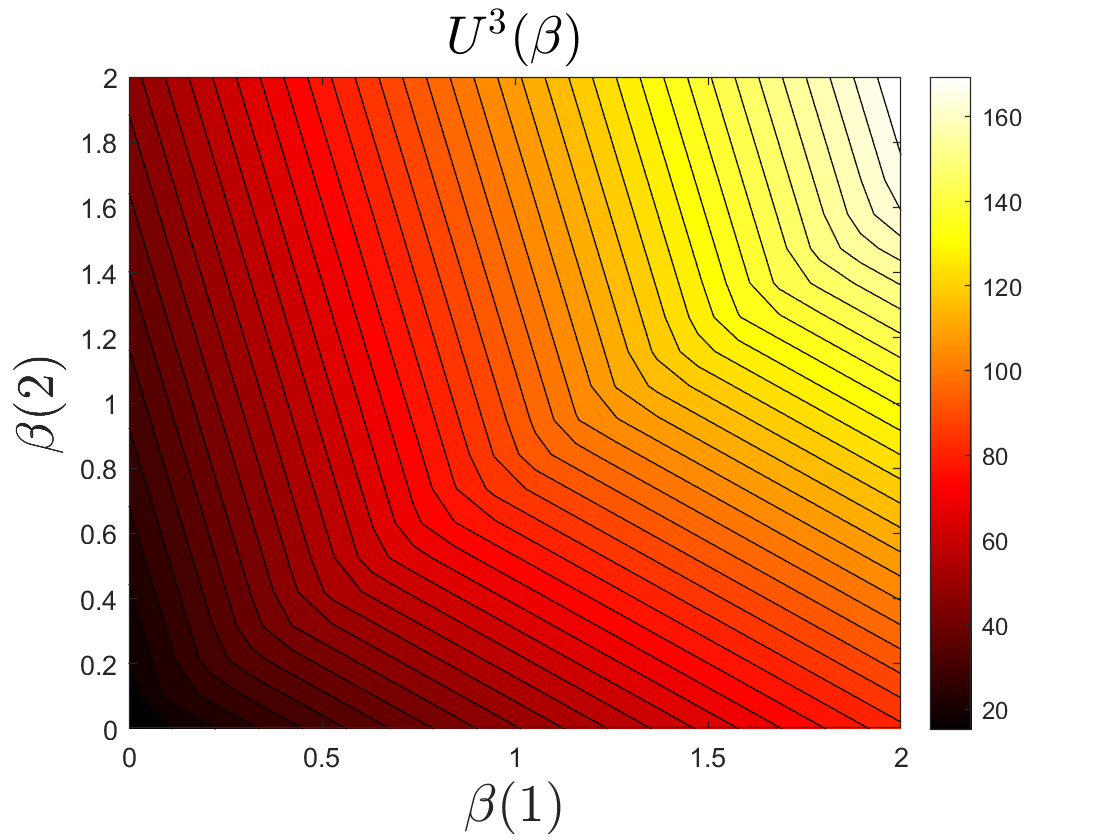}
            \caption[]%
            {{\scriptsize $U^3(\beta)$}}    
            \label{fig:util3}
        \end{subfigure}
        \caption[]
        {\small $f^i(\beta)$ is the true objective function of the $i$'th radar, inducing the responses $\{\hat{\beta}_n^i\}_{n=1}^{10}$. $U^i(\beta)$ is the reconstructed objective function for radar $i$, computed using the noisy dataset $\barbeta$ and \eqref{eq:NumRec}.}
        \label{fig:utilities}
\end{figure}

\section{Conclusion}
\label{sec:conc}
In this work we present a methodology for detecting coordination in a cognitive radar network by observing noisy radar signals. We first present the equivalence between radar network multi-objective optimization (coordination) and a linear program formulation. This allows us to develop a statistical detector, with theoretical performance guarantees, for identifying coordination in the radar network from noisy signals. We present a practical algorithm for implementing the statistical detector and reconstructing functions which approximate the true objective functions in the multi-objective optimization. We present numerical simulations which demonstrate the efficacy of both the statistical detector and the function reconstruction procedure. We note that while we focus on the concrete example of a cognitive radar network, the methodology applies more generally to inverse multi-objective optimization.

\bibliographystyle{IEEEtran}
\bibliography{Bibliography.bib}

\section{Appendix}
\subsection{Lemma 1}
Consider the optimization problem \eqref{eq:moo}. Then 
\begin{equation}
\label{lemeq:mu0}
    \var_t^j > 0 \Rightarrow \sw^j > 0
\end{equation}

\begin{proof}
    Suppose $\sw^j = 0$ and let $\{\var_t^i\}_{i=1}^{\numagents}$ satisfy \[\lconst'(\sum_{i=1}^{\numagents}\var_t^i) \leq p^*\] with $\var_t^j > 0$. Then \[\lconst'(\sum_{i=1}^{\numagents} \var_t^i) = \lconst (\sum_{i\neq j}\var_t^i) + \lconst(\var_t^j) \leq p^* \]
    and \[\sum_{i=1}^{\numagents}\sw^i f^i(\var_t^i) = \sum_{i=1, i\neq j}^{\numagents}\sw^i f^i(\var_t^i)\]
    and since $\lconst > 0$, $\var_t^j > 0$, $\exists \ \delta > 0$ such that
    \[\lconst'(\sum_{i=1, i \neq j}^{\numagents} \var_t^i) \leq p^* - \delta\]

Let \[X_j(\lconst, p^*) := \{\{\var_t^i\}_{i\neq j} : \lconst (\sum_{i=1, i\neq j}^{\numagents} \var_t^i) \leq p^*\} \], and fix some $\var_t^k, k\neq j$. we have that
\[\var_t^k \leq f^{k^{-1}}\left(\frac{1}{\sw^k}(p^* - \delta - \sum_{i \neq k}\sw^i f^i(\var_t^i) )\right)\]
Now take
\[\bar{\var_t}^k =  f^{k^{-1}}\left(\frac{1}{\sw^k}(p^* - \sum_{i \neq k}\sw^i f^i(\var_t^i)) \right)\]
Then, since $f^k$ is monotone increasing, we have 
\begin{align*}
    &\bar{\var_t}^k > \var_t^k, so \\
    &\sum_{i=1}^{\numagents} \sw^i f^i(\var_t^i) < \sum_{i=1,i\neq k}^{\numagents} \sw^i f^i(\var_t^i) + \sw^kf^k(\bar{\var_t}^k)
\end{align*}
and \[\{\var_t^i\}_{i=1,i\neq k}^{\numagents}\cup\{\bar{\var_t}^k\} \in X_j(\lconst,p^*)\] so 
\[\{\var_t^i\}_{i=1}^{\numagents} \notin \arg\max_{\{\gamma^i\}_{i=1}^{\numagents}}\sum_{i=1}^{\numagents} \sw^i f^i(\gamma^i) \ s.t. \ \lconst'(\sum_{i=1}^{\numagents}\gamma^i \leq p^*)\]
and thus by contradiction we have that for any $\sw^j, \var_t^j$ in \eqref{eq:moo}, we have $\sw^j = 0 \Rightarrow \var_t^j = 0$. Note that this directly implies \eqref{lemeq:mu0}.
\end{proof}


\subsection{Proof of Theorem \ref{thm:stat_det}}
\label{pf:stat_det}
\begin{proof}[Proof: 1]
Suppose $H_0$ holds. By Theorem \ref{thm:cherchye1}, $H_0$ is equivalent to \eqref{af_ineq} having a feasible solution. Let $(\bar{u}_t^i,\bar{\lambda}_t^i, t \in [T])_{i=1}^M$ denote a feasible solution to \eqref{af_ineq}. Then substituting $\nrespi = \respi - \noisei_t$, it is apparent that ($\bar{u}_t^i, \bar{\lambda}_t^i, \Phi = \rvtesti$) is feasible. So, clearly the minimizing solution of \eqref{eq:LP} satisfies $\optstati \leq \rvtesti \ \forall i \in [M]$.

Now suppose $\optstati \leq \rvtesti \ \forall i \in [M]$, and let $(\bar{u}_t^i,\bar{\lambda}_t^i)$ denote a feasible solution to \eqref{eq:LP}. Similarly, this implies that \eqref{af_ineq} has a feasible solution, i.e. $H_0$ holds. 
\end{proof}

\begin{proof}[Proof: 2]
From \eqref{eq:H0equiv}, the probability of Type-I error is 
\begin{equation}
\label{eq:T1err1}
    \Popt(H_1 | H_0) = \Prob(\fccdf(\optstat) \leq \gamma \ | \ \bigcap_i \{\optstati \leq \rvtesti\})
\end{equation}
First note that \[\bigcap_i \{\optstati \leq \rvtesti\} \subseteq \{\optstat \leq \rvtest\}\]
and thus \eqref{eq:T1err1} is equivalent to 
\begin{align*}
\begin{split}
\label{eq:T1err2}
    &\Popt(H_1 | H_0) = \\&\Prob(\fccdf(\optstat) \leq \gamma \ | \ \{ \optstat \leq \rvtest\}\bigcap \bigl\{ \bigcap_i \{\optstati \leq \rvtesti\}\bigr\})
\end{split}
\end{align*}
Now if $\optstat = \rvtest$, then since $\fccdf(\rvtest)$ is uniform
in [0,1] we have $\Popt(H_1 | H_0) = \gamma$.
If $\optstat < \rvtest$ then
\begin{align*}
    \begin{split}
&\fccdf(\optstat) \geq \fccdf(\rvtest) \\
&\quad \Rightarrow \Prob(\fccdf(\optstat) \leq \gamma) \leq \Prob(\fccdf(\rvtest) \leq \gamma) \leq \gamma\\
& \quad  \Rightarrow \Popt(H_1 | H_0) \leq \gamma
    \end{split}
\end{align*} 
\end{proof}

\begin{proof}[Proof: 3]
Suppose $\barstati > \optstati \ \forall i \in [M] \Rightarrow \baroptstat := \max_i \barstati > \optstat$. Then we have 
\begin{align*}
    \begin{split}
        &\Prob(\fccdf(\baroptstat) \leq \gamma | \bigcap_i \{\barstati \leq \rvtesti\}) \\
        &\geq P(\fccdf(\optstat) \leq \gamma  | \bigcap_i \{\optstati \leq \rvtesti\} \\
        &\Rightarrow \Prob_{\bar{\Phi}(\barbeta)}(H_1|H_0) \geq \Popt(H_1|H_0) \ \forall \bar{\Phi} \in [\optstat,\rvtest]
    \end{split}
\end{align*}
\end{proof}

\end{document}